\colorlet{T}{black}
\newcommand{\ignore}[1]{}
\newcommand{\R}{\mathbb{R}}
\newcommand{\Rn}{\mathbb{R}^n}
\newcommand{\secure}{{\em secure}}
\newcommand{\compromised}{{\em compromised}}
\newtheorem{control-rule}{Event-based Trigger Rule}
\begin{document}
\title{An Event-based Parameter Switching Method for Controlling Cybersecurity Dynamics}
\titlerunning{An Event-based Method for Controlling Cybersecurity Dynamics}
%
\author{Zhaofeng Liu\inst{1} \and
Wenlian Lu\inst{1,2,3} \and
Yingying Lang\inst{1}}
\authorrunning{Z. Liu et al.}
%
\institute{School of Mathematical Sciences, Fudan University, Shanghai 200433, China
\email{zhaofengliu@hotmail.com}\\
\email{wenlian@fudan.edu.cn}\\
\email{langyy18@fudan.edu.cn}
\\
\and
Shanghai Center for Mathematical Sciences, Fudan University, Shanghai 200433, China
\\
\and
Shanghai Key Laboratory for Contemporary Applied Mathematics, Shanghai 200433, China
}
\maketitle              
\begin{abstract}
  This paper proposes a new event-based parameter switching method for the control tasks of cybersecurity in the context of preventive and reactive cyber defense dynamics. Our parameter switching method helps avoid excessive control costs as well as guarantees the dynamics to converge as our desired speed. Meanwhile, it can be proved that this approach is Zeno-free. A new estimation method with adaptive time windows is used to bridge the gap between the probability state and the sampling state. With the new estimation method, several practical experiments are given afterwards.

\keywords{event-based method  \and preventive and reactive cyber defense dynamics \and cybersecurity dynamics.}

\end{abstract}
\section{Introduction}
With the rapid development of internet technology (IT) and internet of things (IoT) technology, cybersecurity issues are attached more and more importance these years due to people's increasing reliance on internet.
About 2 million to 5 million computers worldwide suffered from malware (are infected with malware) every day, estimating by the domestic network security company Jinshan in 2016. Cybersecurity, as a new subject for research, has received extensive attention from the academic community.

The emerging research field 'cybersecurity dynamics' \cite{XuCybersecurityDynamicsHotSoS2014,XuBookChapterCD2019} is an interdisciplinary field,
conceived from the methodology of several early studies in {\em biological epidemiology} (e.g., \cite{McKendrick1926,Kermack1927,Bailey1975,Anderson1991,HethcoteSIAMRew00}) and its variants in
{\em cyber epidemiology} (e.g.,  \cite{KephartOkland91,KephartOkland93,Pastor2001,Moreno2002}),
{\em interacting particle systems} \cite{Liggett1985},
and {\em microfoundation in economics} \cite{Hoover2010}.
Different from the classical researches oriented to specific tools, such as Cryptography and Database Security, cybersecurity dynamics studies the  offensive and defensive models under various circumstances from a whole-network perspective.

\subsection{Our Contributions}
\label{sec:our-contributions}

In this paper, we investigate how to control the evolution of cybersecurity dynamics more efficiently and effectively in the context of
{\em preventive and reactive cyber defense dynamics}, and guarantee its globally convergence to a safe state.

When it comes to the control problem of the highly nonlinear network dynamics system, the traditional method is to use a single control strategy, that is, to adjust the dynamic parameters 
and maintain them, so that the security dynamics of the network space converges to a safe state globally.
However, maintaining a high level of dynamic defense strategy may result in high control costs.
Such excessive prevention and control will lead to a waste of control resources to some extent, and may even have negative effects on the stable operation of the network systems.

In order to solve this problem, this paper proposes an event-based parameter switching approach to save control resources and control the evolution of cybersecurity dynamics in a decentralized manner. It is also proved that this approach is Zeno-free, that is, it will not fall victim to the Zeno behavior. Numerical examples show that the maintenance hours of high-cost control strategies can be reduced by more than $40\%$  with our parameter switching approach.

In addition, this paper provides an estimation method to bridge the gap between the probability state and the sampling state when using the method in practice.
Different from the equilibrium state estimation problem\cite{liu2020using}, the event-based control method requires considering the timeliness of probability estimation. A new adaptive estimation method is proposed to verify the effectiveness of the event-based parameter switching method in the control process through numerical examples.


\

\subsection{Related Work}
\label{sec:related-work}

Similar event-based methods have been employed in many other application settings before (see, for example, \cite{Astrom2002,Tabuada2007,6425820}).
In practical application, one of the essential problems is that this method should not fall victim to the Zeno behavior, which can lead to infinitely many events within a finite period of time, thus invalidate the method \cite{Johansson1999}.
The importance of assuring Zeno-freeness in event-triggered control is witnessed by a body of literature, including
\cite{Tabuada2007,Johannesson2007,WangLemmon2011,Dimarogonas2012}. 

The first cybersecurity dynamics model was proposed in \cite{XuAINA07}. Ten years later,
\cite{ZhengSub} demonstrated that a certain class of cybersecurity dynamics is globally convergent in the entire parameter universe, which laid a foundation for our further research. The notion of cybersecurity dynamics, as discussed in \cite{XuCybersecurityDynamicsHotSoS2014,XuBookChapterCD2019}, has opened the door to a new research field.

%

The preventive and reactive cyber defense dynamics is a particular kind of cybersecurity dynamics.
Several other kinds of cybersecurity dynamics have been demonstrated in early studies, such as the models aiming to accommodate adaptive defenses \cite{XuTAAS2014}, active defenses \cite{XuHotSoS15,XuInternetMath2015ACD,XuGameSec13}, and
proactive defenses \cite{XuHotSOS14-MTD}.
It is worth mentioning that the event-based parameter switching method may be extended and applied to the various kinds of dynamics.

\subsection{Paper Outline}
In Section \ref{sec:model}, we briefly review the preventive and reactive cyber defense dynamics model and its global convergence in the entire parameter universe \cite{ZhengSub}. And then we state the problems we addressed in this paper.
In Sections \ref{sec:proof-of-problem},
we present an
event-based parameter switching method for controlling the cybersecurity dynamics, and prove the effectiveness of the control method (without Zeno behavior).
In Section \ref{sec:simulation}, we present numerical examples on the theoretical model. Then in Section \ref{sec:gap}, we show how to apply this event-based parameter switching method in practice by bridging the gap between the probability-state in the theoretical model and the sample-state in practice, using an stochastic process method with adaptive time windows.
In Section \ref{sec:conclusion}, we conclude the paper with open problems.

\section{Problem Statement}
\label{sec:model}

\subsection{Review of Preventive and Reactive Defense Dynamics}
As a particular kind of cybersecurity dynamics, the preventive and reactive defense dynamics model is first introduced in \cite{XuAINA07},
and the convergence properties of the dynamics is studied in \cite{XuTAAS2012} .
Later \cite{ZhengSub} fully analyzed the convergence issues, not only considered the common situation with node homogeneity (i.e., the parameters are node-independent), but also a more general situation with node heterogeneity (i.e., the parameters are nodes-dependent).
The paper proved that this dynamics model is globally convergent in the entire parameter universe, that is, there is always a unique equilibrium, whose exact value (or position) depends on the specific parameter values instead of the initial state of the dynamics.

In the preventive and reactive defense dynamics model, we consider two classes of defenses: {\em preventive defenses} and {\em reactive defenses}, and two classes of attacks: {\em push-based attacks} and {\em pull-based attacks}.

Suppose that the attack-defense interaction occurs over an {\em attack-defense} graph structure $G=(V,E)$, where $V$ is the vertex set representing computers and $(u,v)\in E$ means computer $u$ can directly attack  computer $v$ using push-based attack strategy  (i.e., the communication from $u$ to $v$ is allowed by the security policy).
 $G$ can be derived from the security policy of a networked system and the physical network in question.
Without loss of generality, we do {\em not} make  any  restrictions on the structure of $G$ (e.g., $G$
may be directed or undirected).
Denote the adjacency matrix of $G$ by $A=[a_{vu}]_{n\times n}$, where $a_{vu}=1$ if and only if $(u,v) \in E$.
Since the model aims to describe the attacks between computers,
we set $a_{vv}=0$.
Let $N_{v}=\{u\in V:~(u,v)\in E\}$.

In this paper, we consider the continuous-time model described in \cite{ZhengSub}.
At any time point, a node $v \in V$ is in one of two states: ``0'' means  \secure\ but vulnerable,
or ``1'' means  \compromised.
Let $s_v(t)$ and $i_v(t)$ be the probability that $v$ is {\em secure} and {\em compromised} at time $t$ respectively. It is obvious that $s_v(t)+i_v(t)=1$, $s_v(t)$ and $i_v(t)$ explain the term {\em probability-state}.



For a node $v\in V$ at time $t$, let $\theta_{v,1\to0}(t)$ abstract the effectiveness of the reactive defenses and $\theta_{v,0\to1}(t)$ abstract the capability of attacks against the preventive defenses.
$\beta_v\in (0,1]$ represents the probability that the \compromised\ computer $v$ changes to the \secure\ state because the attacks are detected and cleaned up by the reactive defenses.
Then, $\theta_{v,1\to0}(t)=\beta_v$.
Let $\alpha_v\in [0,1]$ denote the probability that the
\secure\ computer $v$ becomes \compromised\ despite the presence of the preventive defenses (i.e., the preventive defenses are penetrated by the pull-based attacks). And
let $\gamma_{uv}\in (0,1]$ denote the probability that a
\compromised\ computer $u$ wages a successful attack against a \secure\ computer $v$ despite the preventive defenses (i.e., the preventive defenses are penetrated by push-based attacks), where $(u,v)\in E$.
Under the assumption that the attacks are waged independent of each other, it holds that
\begin{equation}
\label{eq:nonlinear-term}
\theta_{v,0\to1}(t)=1-(1-\alpha_v)\prod_{u\in N_{v}}\big(1-\gamma_{uv} i_{u}(t)\big).
\end{equation}

The dynamics can be rewritten as a system of $n$ nonlinear equations for $v\in V$ \cite{ZhengSub}:
\begin{equation}
\begin{aligned}
\label{Main}
\dfrac{{\rm d}i_{v}(t)}{{\rm d}t}=f_{v}(i)=-\beta_v i_{v}(t)
+\bigg[1-(1-\alpha_v)\prod_{u\in N_{v}}\big(1-\gamma_{uv} i_{u}(t)\big)\bigg]\big(1-i_{v}(t)\big).
\end{aligned}
\end{equation}

Notice that system \eqref{Main} is globally stable (i.e., there exists a unique equilibrium $i^{*}\in [0,1]^n$ such that every trajectory of system \eqref{Main} converges to $i^{*}$)  no matter whether the parameters are nodes-dependent or nodes-independent \cite{ZhengSub}. If the parameters of the network system are nodes-independent (i.e., $\alpha_v=\alpha$, $\beta_v=\beta$ for any $v\in V$ and $\gamma_{uv}=\gamma$ for any $u,v\in V$, $(u,v)\in E$), the global convergence of system \eqref{Main} can be summarized as follows:


\begin{itemize}
\item If the attacker wages both push-based and pull-based attacks on some nodes $v\in V$ (i.e., $\alpha_v>0$ for some nodes $v\in V$),
system \eqref{Main} is globally convergent in the entire parameter universe and the dynamics converges to a unique {\em nonzero} equilibrium exponentially.
\item If the attacker only wages push-based attacks (i.e., $\alpha_v=0$ for any nodes $v\in V$),
system \eqref{Main} is still globally convergent in the parameter universe but the convergence speed depends on all the model parameters $(\beta_v,\gamma_{uv})$ and
the largest eigenvalue $\lambda_{A,1}$ of adjacency matrix $A$.
\end{itemize}

In this paper, we need to control the dynamics with nodes-dependent parameters converging to equilibrium zero. Despite the complexity of nodes heterogeneity, we can still take advantage of the convergence properties of the dynamics with nodes-independent parameters.

\subsection{Problem Statement: Controlling Cybersecurity Dynamics}

In this paper, we focus on controlling the convergence process of the cybersecurity dynamics model in a decentralized control manner, which means that we only need to observe the state of the target node $v$ during the control process of $v$, with no need to observe the states of its neighbors within the network. For every node $v$ of the network, we need to control $i_v(t)$ converging to zero at our target convergence speed with relatively low control cost by switching its parameter $\beta_v$ 
according to our control rule.

Before presenting the control method, we need to finish two pre-control steps to assure the effectiveness of our method. As introduced before, we need to control the preventive and reactive defense dynamics converging to equilibrium zero globally. So firstly, as discussed above, we need to force the parameter $\alpha_v=0$ for each node $v$, which means that the threats of pull-based attacks are eliminated after the first step of the control process (e.g., connections between some compromised websites and the network system in which the pull-based epidemic spreading takes place are all cut off). This is the first step of the pre-control process. Then the corresponding push-based dynamics model we focus on can be rewritten as:
\begin{equation}
\begin{aligned}
\label{Main-push}
\dfrac{{\rm d}i_{v}(t)}{{\rm d}t}=f_{v}(i)=-\beta_v i_{v}(t)
+\bigg[1-\prod_{u\in N_{v}}\big(1-\gamma_{uv} i_{u}(t)\big)\bigg]\big(1-i_{v}(t)\big).
\end{aligned}
\end{equation}

As for $\beta_v$, we select two reactive defense strategy with different control cost for our parameter switching method, including one relatively strict defense strategy with higher control cost, denoted by $\beta_{+}$, and one relatively relaxed defense strategy with lower control cost, denoted by $\beta_{-}$. Then we apparently have $\beta_{+}>\beta_{-}$. These two strategies should satisfy the conditions that they are both able to make the dynamics converge to equilibrium zero. The key difference between them is, comparing to our target convergence speed with regard to the dynamics to be controlled, the dynamics with $\beta_v=\beta_{+}$ should converge faster than the target speed for all nodes in $V$, while the dynamics with $\beta_v=\beta_{-}$ may converge more slowly than the target speed.

The classical approach to control the dynamics through adjusting the reactive defense strategy is the trivial method that forcing $\beta_{v}=\beta_{+}$ during the entire control process, which is inefficient because it may cost too many defense resources to maintain the relatively strict reactive defense strategy. Besides, the convergence speed of the dynamics under high-cost control may be faster than what we actually need, causing redundancies and wastes of defense resources to some extent.

As discussed above, both the relatively strict defense strategy and the relatively relaxed defense strategy need to be able to make the dynamics converge to equilibrium zero. So a safe method to get equilibrium zero is to let $\beta_v/\gamma_{max}\ge\lambda_{A,1}$ for both $\beta_v=\beta_{+}$ and $\beta_v=\beta_{-}$, where $\gamma_{max}$ denotes the maximum value of probability $\gamma_{uv}$ for all neighbor nodes pair $(u,v)\in E$. But due to the variety of parameter $\gamma_{uv}$, $\gamma_{max}$ can be relatively large in practice, leaving little choice for parameter $\beta_v$. Therefore, the second step of the pre-control process is to force $\gamma_{max}$ to a relatively small value, which means we need to permanently reinforce the preventive defense strategy for the nodes which are more vulnerable to push-based attacks launched by the attacker (e.g., a stronger network firewall or filter is deployed). 

After the two pre-control steps, in which all the pull-based attacks have been eliminated and $\gamma_{max}$ is relatively small, we now employ an parameter switching method to control the convergence speed
with relatively low cost of defense resources. We use an event-based mechanism to define the cost-saving parameter switching rule.
\subsection{Notations}
Table \ref{table:model-variables-and-parameters} summarizes the major notations used in the paper.


\begin{table}[!htbp]
\centering
\caption{Notations used throughout the paper.\label{table:model-variables-and-parameters}}
\begin{tabular}{|c|p{0.8\textwidth}|}
\hline
$I_n$ & the $n*n$ identity matrix\\ \hline
$\R$ & the set of real numbers\\ \hline
$\mathbb{N}$ & the set of positive integers and zero\\ \hline
$\|i\|_{1}$ &  $\|i\|_{1}=\sum_{v=1}^{n}\|i_{v}\|$ is the $l_{1}$-norm for an $n$-dimensional vector $i=[i_{1},\ldots,i_{n}]\in\Rn$.
Note that the result equally holds with respect to other norms.\\ \hline
$G=(V,E), A$ & the attack-defense graph structure $G$ with adjacency matrix $A=[a_{vu}]_{n\times n}$ where $a_{vu}=1$ if and only if $(u,v)\in E$\\ \hline
$N_{v}$ & $N_{v}=\{u\in V:(u,v)\in E\}$\\ \hline
$\alpha_v \in [0,1]$ & the probability that \secure\ node $v$ becomes \compromised\ because pull-based attack penetrates preventive defense \\ \hline 
$\beta_v \in (0,1]$ & the probability that \compromised\ node $v$ becomes \secure\ because reactive defense detects and cleans compromise \\ \hline 
$\gamma_{uv} \in (0,1]$ & the probability that a \compromised\ neighbor node $u$ wages a successful push-based attack against \secure\ node $v$ \\ \hline
$\gamma_{max} \in (0,1]$ & the maximum value of probability $\gamma_{uv}$ for all neighbor nodes pair $(u,v)\in E$ \\ \hline
$\beta_{-}$, $\beta_{+}$ & the value of parameter $\beta_v$ for all nodes $v\in V$ in low-cost (high-cost) reactive defense setting\\ \hline
$i_v(t)$, $i(t)$ & the probability $v$ is in \compromised\ state at time $t$; $i(t)=[i_{1}(t),\cdots,i_{n}(t)]$ \\ \hline
$\varphi_{up}(s)$, $\varphi_{up}(s)$ & the decision functions that trigger high-cost (low-cost) control events\\ \hline
$t_{k}^{v}$, $\tau_{k}^{v}$ & the time for the $k$-th high-cost (low-cost) control event at $v\in V$ in the event-based control method; $t_{1}^{v}=\tau_{0}^{v}=0$  \\ \hline
$T_{-}^v$, $T_{+}^v$ & the total time of maintaining the parameter $\beta_v$ of the target node in low-cost (high-cost) reactive defense setting during control process\\ \hline
$\mathcal{S}(t)$, $\mathcal{S}_{v}(t)$ & the exponential speed index of the convergence speed of $i(t)$ ($i_v(t)$) when the dynamics converge exponentially\\ \hline
$\chi_v(t)$ & the sample-state of node $v$ at time $t$; 0 means {\em secure} and 1 means {\em compromised} \\ \hline
$\widehat{i_{v}(t)}$, $\widehat{s_{v}(t)}$ & the probability $v$ is in the \compromised\ (\secure) state at time $t$ as estimated from the sample-states \\ \hline
$\mathcal{W}$, $\mathcal{W}^{'}(t)$ & the time window with fixed (adaptive) time length for estimating the probability-states from sample-states\\ \hline

\end{tabular}
\end{table}

\section{An Event-based Parameter Switching Method}
\label{sec:proof-of-problem}

In this section, we initiate an event-based parameter switching method to control the convergence speed to our target speed, with relatively low cost of defense resources. That is to say, we switch the reactive defense strategy between two predefined settings in practice, according to the event-based trigger rule we proposed. We first show that the global dynamics under control will converge to equilibrium zero at our target convergence speed. Then we prove that there is no Zeno behavior during the entire control process.

\subsection{Designing Event-based Parameter Switching Rule}
\label{subsec:Design-Rule}

We apply the control method on all nodes of the network system. But for the purpose of clarification, we focus on one target node $v$ to explain the control process corresponding to the decentralized control manner as discussed above.

We switch the parameter $\beta_v$ of the target node $v$ between two different groups of parameters alternately, which are the low-cost parameter $\beta_v=\beta_{-}$ and the high-cost parameter $\beta_v=\beta_{+}$. Here the footnotes 'lc' stands for 'low cost' and 'hc' stands for 'high cost'. We use $T_{-}^v$ and $T_{+}^v$ respectively to denote the total time of maintaining the two parameters $\beta_{-}$ and $\beta_{+}$ during the entire control process for node $v\in V$. We want the dynamics to converge to zero at target convergence speed with a relatively low total cost, which means to make the mean of time ratio $\frac{1}{n}\sum_{v\in V}\frac{T_{+}^v}{T_{+}^v+T_{-}^v}$ relatively small.

Before defining the event-trigger rule, let us review these definition and lemma.

For all nodes $v\in V$, let $n*n$ real matrix $K=\{\gamma_{uv}\}_{u,v=1}^n$, where $\gamma_{vv}=0$ for all nodes $v\in V$, and let non-singular $n*n$ real diagonal matrix $B=diag(\{\beta_v\}_{v=1}^n)$. Notice that $\beta_v$ takes values in $\{\beta_{-},\beta_{+}\}$ for all nodes $v\in V$. Let $B_{+}=\beta_{+}I_n$ and $B_{-}=\beta_{-}I_n$, where $I_n$ is $n*n$ identity matrix. Then we have following lemma.
\begin{lemma}
\label{lemma-get-m-matrix}
If there exists some positive constant number $\iota$, so that $J=B_{+}-K-\iota I_n$ is a monotone matrix (M-matrix),
then there exists a positive diagonal matrix $P=diag(\{p_v\}_{v=1}^n)$ such that $[(-J)P+P^T(-J)]$ is negative definite.
\end{lemma}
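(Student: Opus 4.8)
The plan is to exploit the fact that $J = B_+ - K - \iota I_n$ is an M-matrix. Recall that a (non-singular) M-matrix $J$ can be written $J = sI_n - C$ with $C \ge 0$ (entrywise) and $s > \rho(C)$, the spectral radius of $C$; equivalently $-J = C - sI_n$ is, up to the diagonal shift, a non-negative matrix. By the standard theory of M-matrices (e.g.\ the Fiedler--Pt\'ak characterizations), there exists a positive diagonal matrix $P = \diag(\{p_v\}_{v=1}^n)$ such that $JP$ is strictly diagonally dominant by columns: that is, for every column $v$,
\begin{equation*}
(JP)_{vv} > \sum_{u \neq v} |(JP)_{uv}|.
\end{equation*}
(One way to obtain such $P$ is to take $p = (p_v)$ a positive vector with $J^\top p > 0$, which exists because $J^\top$ is also a non-singular M-matrix; here $J$ is essentially $B_+ - K$ shifted, and the off-diagonal entries of $-J$ are exactly the $\gamma_{uv} \ge 0$.) First I would invoke this characterization to fix the matrix $P$.

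Next I would show that column diagonal dominance of $JP$ forces the symmetric part of $-JP$ to be negative definite. Write $M = -JP$. Its diagonal entries satisfy $M_{vv} = -(JP)_{vv} < 0$ and $\sum_{u\neq v}|M_{uv}| = \sum_{u\neq v}|(JP)_{uv}| < (JP)_{vv} = -M_{vv} = |M_{vv}|$, so $M$ is itself (strictly) column diagonally dominant with negative diagonal. By the Gershgorin circle theorem applied to $M^\top$ (or directly: a strictly diagonally dominant matrix with negative diagonal entries has all eigenvalues with negative real part), every eigenvalue of $M$ lies in the open left half-plane. For the \emph{symmetric} part $\tfrac12(M + M^\top) = \tfrac12[(-J)P + P^\top(-J)]$ (using $P = P^\top$), I would note that this matrix is symmetric with the same diagonal $M_{vv} < 0$ and off-diagonal entries $\tfrac12(M_{uv} + M_{vu})$, whose absolute row sums are at most $\tfrac12(\sum_{u\neq v}|M_{uv}| + \sum_{u\neq v}|M_{vu}|)$. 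Here a little care is needed: strict dominance of $M$ by columns plus strict dominance by rows would give the conclusion immediately, but column dominance alone does not directly yield row dominance of the symmetrized matrix.

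The cleanest route around this — and the step I expect to be the main obstacle — is therefore not to argue diagonal dominance of the symmetrized matrix, but to use a Lyapunov-type argument directly: since all eigenvalues of $-J$ have positive real part ($J$ being a non-singular M-matrix, hence positive stable), and more is true, namely that one can choose $P$ diagonal positive with $(-J)P + P(-J)^\top \succ 0$. This is exactly a known refinement for M-matrices: a matrix $-J$ with $J$ an M-matrix admits a \emph{diagonal} Lyapunov solution. So the real content is to quote (or reprove) the theorem that $J$ M-matrix $\Rightarrow$ $\exists\,P \succ 0$ diagonal with $P(-J) + (-J)^\top P \succ 0$, then transpose to match the stated form $[(-J)P + P^\top(-J)] \succ 0$. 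I would structure the proof as: (1) recall $J$ non-singular M-matrix $\Leftrightarrow$ $-J$ is ``diagonally stable''; (2) extract the diagonal $P$ from that equivalence; (3) verify the resulting quadratic-form expression is the one in the statement, using $P = P^\top$ and negative definiteness being preserved under transposition. The delicate point throughout is making sure the shift by $\iota I_n$ and the zero diagonal convention on $K$ are consistent with the off-diagonal-nonpositivity requirement in the definition of an M-matrix, which I would check explicitly at the start.
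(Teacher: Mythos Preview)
The paper does not supply a proof of this lemma; it is stated without argument or citation and then invoked inside the proof of the main theorem. So there is nothing to compare against, and the question is only whether your argument stands on its own.

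It does, once you drop the first half. Your initial route through column diagonal dominance of $JP$ does not close, and you correctly diagnose why: strict column dominance of $M=-JP$ does not force strict diagonal dominance of the symmetrized matrix $\tfrac12(M+M^{\top})$, so Gershgorin there is not available. Your second route---quoting the diagonal-stability theorem for nonsingular M-matrices (every such $J$ admits a positive diagonal $P$ with $PJ+J^{\top}P\succ 0$, equivalently $P(-J)+(-J)^{\top}P\prec 0$)---is the standard argument and is correct; the check that $J$ has nonpositive off-diagonal entries is immediate since $K\ge 0$ entrywise with zero diagonal. One further remark: the only consequence the paper actually extracts from this lemma, in the proof of the main theorem, is the componentwise inequality
\[
-\beta_+\,p_{v}+\sum_{\omega\in N_v}\gamma_{\omega v}\,p_{\omega}+\iota\,p_{v}<0,
\]
i.e.\ $Jp>0$ entrywise for some positive vector $p$. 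That is the most elementary M-matrix characterization of all, and it is already contained in your parenthetical about choosing $p>0$ with $J^{\top}p>0$ (or $Jp>0$); the full negative-definiteness statement is never used downstream.
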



Apparently, our target convergence speed should be faster than the dynamics with $\beta_v=\beta_{-}$ for all nodes in $V$, and slower than the dynamics with $\beta_v=\beta_{+}$ for all nodes in $V$. We can set our target convergence speed as $C{\rm e}^{-\iota t}$ as long as $\iota>0$ satisfies Lemma \ref{lemma-get-m-matrix}, where $C$ is a positive constant number. We will prove the effectiveness later.

In our event-based parameter switching method, we switch the parameter $\beta_v$ 
when an event is triggered (e.g. certain conditions are satisfied). Between two consecutive events, the parameter $\beta_v$ holds. 
We define two criterion functions according to our target speed of the convergence process:
\begin{eqnarray*}
\begin{cases}
\displaystyle
\varphi_{up}(t)&=~{\rm e}^{-\iota t},{\forall}t\ge0,\\[8pt]
\displaystyle
\varphi_{low}(t)&=~L*{\rm e}^{-\iota t},{\forall}t\ge0.
\end{cases} 
\end{eqnarray*}
where $L$ is an positive constant number satisfying $0<L<1$. Notice that $\varphi_{up}(t)$ represents the ideal convergence process at our target convergence speed. Theoretically, $\varphi_{up}(t)$ can be a polynomial function if the original dynamics converge polynomially. But from the Sard's Lemma in \cite{sard1942measure}, the parameter regime that causing polynomial convergence speed is in a zero measure set and cannot be chosen in practice. So without loss of generality, we let $\varphi_{up}(t)$ be an exponential function for the simplification of narrative. It is also worth noting that $\varphi_{low}(t)$ can be defined in other functional form (only need to satisfy the conditions of convergence speed).

With the matrix $P$ defined in Lemma \ref{lemma-get-m-matrix}, we now define rigger rule as follows:
\begin{definition}[event-based trigger rule]
\label{trigger-rule-control}
$P=diag(\{p_v\}_{v=1}^n)$ is as defined in Lemma \ref{lemma-get-m-matrix}.
Let $m_v(t)=p_v^{-1}i_v(t)$, then for $k=1,2,\ldots$, the trigger rule is defined as:
\begin{itemize}
\item if $m_v(0)\ge 1$, then let $t_{1}^{v}=0$, and
\begin{eqnarray*}
\begin{cases}
\displaystyle
\tau_{k}^{v}=&\inf\bigg\{s\ge t_{k}^{v}:m_v(s)\le \varphi_{low}(s)\bigg\}\\[8pt]
\displaystyle
t_{k+1}^{v}=&\inf\bigg\{s\ge \tau_{k}^{v}:m_v(s)\ge \varphi_{up}(s)\bigg\}
\end{cases} 
\end{eqnarray*}
\item if $m_v(0)<1$, then let $\tau_{0}^{v}=0$, and
\begin{eqnarray*}
\begin{cases}
\displaystyle
t_{k}^{v}=&\inf\bigg\{s\ge \tau_{k-1}^{v}:m_v(s)\ge \varphi_{up}(s)\bigg\}\\[8pt]
\displaystyle
\tau_{k}^{v}=&\inf\bigg\{s\ge t_{k}^{v}:m_v(s)\le \varphi_{low}(s)\bigg\}
\end{cases} 
\end{eqnarray*}
\end{itemize}
which specifies two sequences of parameter switching events:
\begin{itemize}
\item High-cost control event: At time $t_{k}^{v}$, the value of parameter $\beta_v$ switches to $\beta_{+}$ which generates relatively high control cost.
\item Low-cost control event: At time $\tau_{k}^{v}$, the value of parameter $\beta_v$ switches to $\beta_{-}$ which generates relatively low control cost.
\end{itemize}
\end{definition}


As discussed above, both $\beta_{-}$ and $\beta_{+}$ need to be able to make the dynamics converge to equilibrium zero in the parameter regime after pre-control. So system \eqref{Main-push} can be written as: for target node $v\in V$,
\begin{itemize}
\item If $t\in[t_{k}^{v},\tau_{k}^{v})$,
\begin{equation}
\begin{aligned}
\label{high-cost-dynamics}
\frac{{\rm d}i_{v}(t)}{{\rm d}t}
=-\beta_{+} i_{v}(t)
+\bigg[1-
\prod_{u\in N_{v}}\Big(1-\gamma_{uv} i_{u}(t)\Big)\bigg]\big(1-i_{v}(t)\big),
\end{aligned}
\end{equation}
\item If $t\in[\tau_{k}^{v},t_{k+1}^{v})$,
\begin{equation}
\begin{aligned}
\label{low-cost-dynamics}
\frac{{\rm d}i_{v}(t)}{{\rm d}t}
=-\beta_{-} i_{v}(t)
+\bigg[1-
\prod_{u\in N_{v}}\Big(1-\gamma_{uv} i_{u}(t)\Big)\bigg]\big(1-i_{v}(t)\big),
\end{aligned}
\end{equation}
\end{itemize}

For $k=1,2,\ldots$, we regard the two control steps \eqref{high-cost-dynamics} and \eqref{low-cost-dynamics} during $t\in[t_{k}^{v},t_{k+1}^{v})$ as the $k$-th {\em control cycle}.


\subsection{Analyzing the Event-based Parameter Switching Method}

Under the control procedure proposed above, we will prove the effectiveness of the event-based parameter switching method, that is, the new dynamics of the target node under control will converge to zero at our target convergence speed, and what's more important, with no Zeno behavior.

\begin{theorem}
\label{theorem-main}
For any node $v\in V$, system \eqref{high-cost-dynamics} \eqref{low-cost-dynamics} generated by the event-based parameter switching control strategy (trigger rule Definition \ref{trigger-rule-control}) will converge to zero at the convergence speed same as $\varphi_{up}(t)$, with no Zeno behavior.
\end{theorem}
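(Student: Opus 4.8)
The plan is to separate the claim into two parts: (i) the controlled dynamics converges to zero with exponential speed governed by $\varphi_{up}(t)={\rm e}^{-\iota t}$, and (ii) the trigger rule is Zeno-free. For part (i) I would work with the rescaled variable $m_v(t)=p_v^{-1}i_v(t)$ from Definition \ref{trigger-rule-control} and build a comparison/Lyapunov argument using the matrix $P$ from Lemma \ref{lemma-get-m-matrix}. The key structural fact is that when $\beta_v=\beta_{+}$ on all nodes, the vector field of system \eqref{high-cost-dynamics} is dominated (using $1-\prod_{u}(1-\gamma_{uv}i_u)\le \sum_u \gamma_{uv} i_u$ and $1-i_v\le 1$) by the linear system with matrix $-J-\iota I_n = -(B_+ - K)$ up to the $\iota I_n$ shift; Lemma \ref{lemma-get-m-matrix} then gives a quadratic Lyapunov function $W(m)=m^\top P^{-1} \cdots$ — more precisely one uses the negative definiteness of $(-J)P + P^\top(-J)$ — whose derivative along the high-cost dynamics is bounded by $-2\iota W$ plus lower-order terms, so that $\|i(t)\|$ decays strictly faster than ${\rm e}^{-\iota t}$ while $\beta_v=\beta_{+}$. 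During low-cost phases the trigger rule only lets $m_v$ rise until it hits $\varphi_{up}$, so $m_v(t)\le \varphi_{up}(t)=\mathrm{e}^{-\iota t}$ is maintained as an invariant across the whole trajectory once $m_v$ has first dropped below $\varphi_{up}$; combined with $m_v(t)\ge 0$ this pins the convergence speed index to exactly that of $\varphi_{up}$. I would also need a short argument that the first hitting time $\tau_1^v$ (the first time $m_v$ reaches the corridor) is finite, which again follows from the faster-than-target decay under $\beta_+$.

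For part (ii), Zeno-freeness, I would show that within each control cycle $[t_k^v,t_{k+1}^v)$ the two switching times are separated by a uniform positive gap, independent of $k$. On $[\tau_k^v,t_{k+1}^v)$ the parameter is $\beta_-$ and $m_v$ grows from the value $L\,{\rm e}^{-\iota \tau_k^v}$ (hitting $\varphi_{low}$) up to ${\rm e}^{-\iota t_{k+1}^v}$ (hitting $\varphi_{up}$); since $\dot m_v$ is bounded on the compact state space $[0,1]^n$ by some constant $M$ (so $m_v$ cannot increase faster than $M/p_{\min}$), while the distance to be covered at "time $\approx t$" is at least $(1-L){\rm e}^{-\iota t}$ minus the drift of the moving barriers, one gets $t_{k+1}^v-\tau_k^v$ bounded below. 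The cleaner route is to pass to the ratio $r_v(t)=m_v(t){\rm e}^{\iota t}$: the corridor becomes the fixed band $[L,1]$, the barriers are stationary, and $\dot r_v = (\dot m_v + \iota m_v){\rm e}^{\iota t}$; using $\dot m_v \le -\,(\text{something})\,m_v + (\text{bounded nonlinear term})$ on the high-cost phase and a crude Lipschitz bound on the low-cost phase, $r_v$ has bounded derivative on $[L,1]$, so crossing a band of width $1-L>0$ takes at least a fixed positive time in each direction. Summing, each control cycle has length bounded below, so only finitely many events occur in any finite interval.

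The main obstacle I anticipate is handling the nonlinear term $1-\prod_{u\in N_v}(1-\gamma_{uv}i_u(t))$ cleanly in the Lyapunov estimate and — more delicately — the \emph{decentralized} nature of the control: node $v$ only switches based on its own $m_v$, yet the convergence of $i_v$ depends on the neighbors' $i_u$. So the quadratic-form argument must be carried out for the \emph{whole} vector $i(t)$ simultaneously, observing that the estimate $m_v(t)\le{\rm e}^{-\iota t}$ is eventually enforced node-by-node by the trigger rule and therefore holds globally; one then feeds this back to show the linear comparison system (with the worst-case all-$\beta_+$ / all-$\beta_-$ matrices sandwiching the true switched system) indeed controls $\|i(t)\|$. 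A secondary subtlety is the very first phase before $m_v$ enters the corridor, where one needs the strict-faster-than-$\varphi_{up}$ decay under $\beta_+$ to guarantee $\tau_1^v<\infty$ and to start the invariant; this is exactly where the strictness hypothesis "$\beta_+$ converges faster than the target speed" is used, and I would make sure the constant $\iota$ from Lemma \ref{lemma-get-m-matrix} is the quantitative form of that hypothesis.
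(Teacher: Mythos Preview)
Your overall architecture is right---split into (i) the corridor invariant $m_v(t)\le\varphi_{up}(t)$ and (ii) a uniform positive dwell time---and your linear upper bound $1-\prod_u(1-\gamma_{uv}i_u)\le\sum_u\gamma_{uv}i_u$ is exactly the reduction the paper uses. Your rescaling $r_v(t)=m_v(t)e^{\iota t}$ is in fact the paper's device too: the key derivative computed there is $\frac{d}{dt}\bigl[i_{v}(t)e^{\iota t}\bigr]$, and for the low-cost phase the paper gets the explicit bound $t_{k+1}^v-\tau_k^v\ge -\ln L/\iota$ from $\varphi_{up}(t_{k+1}^v)\le\varphi_{low}(\tau_k^v)$, slightly sharper than your Lipschitz estimate but the same idea.

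The place where your plan diverges from the paper---and where it has a genuine gap---is the mechanism for (i). A quadratic Lyapunov function built from the negative definiteness of $(-J)P+P^\top(-J)$ gives a \emph{norm} decay $\|i(t)\|\lesssim e^{-\iota t}$ only when \emph{all} nodes carry $\beta_+$; it does not by itself yield the \emph{componentwise} invariant $m_v(t)\le e^{-\iota t}$ under the decentralized switching, because at any instant some nodes sit at $\beta_-$. Your patch (``the trigger rule enforces $m_v\le e^{-\iota t}$ node-by-node, hence globally'') is circular: the switch to $\beta_+$ at the moment $m_v$ touches $\varphi_{up}$ only prevents overshoot if the coupling term $\sum_u\gamma_{uv}i_u$ is already controlled, which needs the invariant for the neighbours.

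The paper breaks this circularity with a weighted-max argument rather than a quadratic form. Set $M(t)=\max_{v\in V}m_v(t)$, and suppose $t^*$ is the first time $M(t^*)=\varphi_{up}(t^*)$. At the maximising node $v^*$ the trigger has just fired, so $\beta_{v^*}=\beta_+$; and because $m_\omega(t^*)\le m_{v^*}(t^*)$ for \emph{every} $\omega$, the linearised inequality gives
\[
\frac{d}{dt}\bigl[i_{v^*}(t)e^{\iota t}\bigr]\Big|_{t=t^*}
\;\le\;\Bigl(-\beta_+ p_{v^*}+\sum_{\omega}\gamma_{\omega v^*}p_\omega+\iota p_{v^*}\Bigr)m_{v^*}(t^*)e^{\iota t^*}\;<\;0,
\]
where the strict negativity of the bracket is exactly the M-matrix condition on $J=B_+-K-\iota I_n$ expressed through $P$. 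Thus $M(t)e^{\iota t}$ cannot cross $1$, and the componentwise corridor holds for all nodes simultaneously. Replacing your quadratic Lyapunov with this $\max$-comparison fixes the gap; everything else in your outline then goes through essentially as written.
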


\begin{proof}
We first prove that $m_v(t)=p_v^{-1}i_v(t)$ under parameter switching control strategy (trigger rule Definition \ref{trigger-rule-control}) could never exceed $\varphi_{up}(t)$ after time $\tau_{1}^{v}$ for all nodes $v\in V$ (i.e., it always holds that $p_v^{-1}i_v(t)\le\varphi_{up}(t)$ after time $\tau_{1}^{v}$).

For any node $v\in V$, with regard to its original dynamic \eqref{Main-push}, let $u_{-}$ be the smallest index in $N_{v}$. Notice that
\begin{align*}
    &~1-\prod_{u\in N_{v}}\Big(1-\gamma_{uv} i_{u}(t)\Big)\\
=&~\bigg[\Big(1-\gamma_{u_{-}v} i_{u_{-}}(t)\Big)+\gamma_{u_{-}v} i_{u_{-}}(t)\bigg]
-\prod_{u\in N_{v}}\Big(1-\gamma_{uv} i_{u}(t)\Big)\\
=&~\gamma_{u_{-}v} i_{u_{-}}(t)+\Big(1-\gamma_{u_{-}v} i_{u_{-}}(t)\Big)
\bigg[1-\prod_{u>u_{-},u\in N_{v}}\Big(1-\gamma_{uv} i_{u}(t)\Big)\bigg].
\end{align*}

This recurrent process will lead to
\begin{align*}
    \frac{{\rm d}}{{\rm d}t}i_{v}(t)
=-\beta_v i_{v}(t)+\Big(1-i_{v}(t)\Big)\sum_{\omega\in N_{v}}\gamma_{\omega v}i_{\omega}(t)
\prod_{u<\omega,u\in N_{v}}\Big(1-\gamma_{uv} i_{u}(t)\Big)
\end{align*}

So we have the following inequality
\begin{align}
\label{inequality-node}
\frac{{\rm d}}{{\mathrm d}t}i_{v}(t)\le-\beta_v i_{v}(t)+\sum_{\omega\in N_{v}}\gamma_{\omega v}i_{\omega}(t).
\end{align}

At time $\tau_1^{v}$, we have $m_v(\tau_1^{v})=\varphi_{low}(\tau_1^{v})<\varphi_{up}(\tau_1^{v})$ for all nodes $v\in V$. Let $M(t)=\max_{v\in V}m_v(t)$. Assume there is some time point $t^{*}$ so that $M(t^{*})=\varphi_{up}(t^{*})$, then for each $v^{*}\in V$ so that
$m_{v^{*}}(t^{*})=M(t^{*})$, by inequality \eqref{inequality-node} we have
\begin{align*}
\frac{{\mathrm d}}{{\mathrm d}t}\bigl[i_{v^{*}}(t){\mathit e}^{\iota t}\bigr]\biggl|_{t=t^{*}}
&\le& -\beta_v^{*} i_{v^{*}}(t^{*}){\mathit e}^{\iota t^{*}}+\sum_{\omega\in N_{v^{*}}}\gamma_{\omega v^{*}}i_{\omega}(t^{*}){\mathit e}^{\iota t^{*}}+\iota i_{v^{*}}(t^{*}){\mathit e}^{\iota t^{*}}\\
&=& -\beta_v^{*} p_{v^{*}}m_{v^{*}}(t^{*}){\mathit e}^{\iota t^{*}}+\sum_{\omega\in N_{v^{*}}}\gamma_{\omega v^{*}}p_{\omega^{*}}m_{\omega}(t^{*}){\mathit e}^{\iota t^{*}}
+\iota p_{v^{*}}m_{v^{*}}(t^{*}){\mathit e}^{\iota t^{*}}\\
\end{align*}

Notice that $m_{v^{*}}(t^{*})\ge m_{\omega}(t^{*})$, so we have
\begin{align*}
\frac{{\mathrm d}}{{\mathrm d}t}\big[i_{v^{*}}(t){\mathit e}^{\iota t}\big]\bigg|_{t=t^{*}}
\le (-\beta_v^{*} p_{v^{*}}+\sum_{\omega\in N_{v^{*}}}\gamma_{\omega v^{*}}p_{\omega^{*}}+\iota p_{v^{*}})m_{v^{*}}(t^{*}){\mathit e}^{\iota t^{*}}
\end{align*}

According to the event-based trigger rule Definition \ref{trigger-rule-control}, the reactive defense strategy switches to high-cost setting $\beta_+$ right after time point $t^{*}$. Recall Lemma \ref{lemma-get-m-matrix}, $J=B_{+}-K-\iota I_n$ is a monotone matrix (M-matrix) and $[(-J)P+P^T(-J)]$ is negative definite, which leads to $$-\beta_v^{*} p_{v^{*}}+\sum_{\omega\in N_{v^{*}}}\gamma_{\omega v^{*}}p_{\omega^{*}}+\iota p_{v^{*}}<0$$

So we have $$\frac{{\mathrm d}}{{\mathrm d}t}\big[i_{v^{*}}(t){\mathit e}^{\iota t}\big]\bigg|_{t=t^{*}}<0$$

This implies that $m_{v^{*}}(t){\mathit e}^{\iota t}=p_{v}^{-1}i_{v^{*}}(t){\mathit e}^{\iota t}$ is strictly decreasing after time point $t^{*}$. In the mean time, notice that $\varphi_{up}(t){\mathit e}^{\iota t}=1$, and $m_{v^{*}}(t^{*})=\varphi_{up}(t^{*})$. Finally we have $m_{v}(t)\le\varphi_{up}(t)$ for all nodes $v\in V$ at any time point $t$.

Next we prove that the parameter switching events will continue to exist till infinite time. That is to say, $t\to+\infty$ implies $k\to+\infty$ for both $\{t_{k}^{v}\}$ and $\{\tau_{k}^{v}\}$ (i.e., there are infinitely many parameter switching events). Besides, we prove that there is no Zeno behavior during the entire control process.

For the convenience of clarification, let ${\mathit e}^{-t\mathcal{S}_{+}^{v}(t)}$ be the convergence speed of dynamics \eqref{high-cost-dynamics} when $\beta_v=\beta_+$ (i.e., the high-cost control), and ${\mathit e}^{-t\mathcal{S}_{-}^{v}(t)}$ be the convergence speed of dynamics \eqref{low-cost-dynamics} when $\beta_v=\beta_-$ (i.e., the low-cost control). Notice that $\mathcal{S}_{+}^{v}(t)$ and $\mathcal{S}_{-}^{v}(t)$ may not be constant numbers, but they always satisfy $\mathcal{S}_{+}^{v}(t)>\iota$ and $\mathcal{S}_{-}^{v}(t)<\iota$.

For any $k=1,2,3\cdots$, with respect to the $k$-th {\em control cycle} during $t\in[t_{k}^{v},t_{k+1}^{v})$, We analyze the two control steps respectively.

i)\  {\em Step one}:

With regard to the high-cost control step \eqref{high-cost-dynamics} during $t\in[t_{k}^{v},\tau_{k}^{v})$, notice that the parameter $\beta_v$ switches to high-cost setting $\beta_+$ since $m_v(t_{k}^{v})=\varphi_{up}(t_{k}^{v})$. Due to $\mathcal{S}_{+}^{v}(t)>\iota$, $i_v(t)$ converges to zero faster than $\varphi_{up}(t)$ and $\varphi_{low}(t)$. This leads to the existence of the next event triggered at $\tau_{k}^{v}$ so that $m_v(\tau_{k}^{v})=\varphi_{low}(\tau_{k}^{v})$, which is a low-cost control event.

In the following, we prove that there is no Zeno behavior in the high-cost control step \eqref{high-cost-dynamics} during $t\in[t_{k}^{v},\tau_{k}^{v})$.

According to the dynamic, we have
\begin{align*}
\bigg|\int_{t_{k}^{v}}^{\tau_k^v}\frac{{\rm d}}{{\rm d}t}\Big[m_v(t)\Big]{\rm d}t\bigg|=\bigg|\varphi_{up}(t_{k}^{v})-\varphi_{low}(\tau_{k}^{v})\bigg|
\end{align*}

For the left side, we have
\begin{align*}
\bigg|\int_{t_{k}^{v}}^{\tau_k^v}\frac{{\rm d}}{{\rm d}t}\Big[m_v(t)\Big]{\rm d}t\bigg|
\le~&p_v^{-1}\int_{t_{k}^{v}}^{\tau_k^v}\bigg|\frac{{\rm d}}{{\rm d}t}\Big[i_v(t)\Big]\bigg|{\rm d}t\\
\le~&M\int_{t_{k}^{v}}^{\tau_k^v}\bigg|{\rm e}^{-t\mathcal{S}_{+}^{v}(t)}\bigg|{\rm d}t\\
\le~& M {\rm e}^{-\iota t_k^v}(\tau_{k}^{v}-t_{k}^{v}),
\end{align*}
where $M$ is a positive constant.

For the right side, we have
\begin{align*}
\bigg|\varphi_{up}(t_{k}^{v})-\varphi_{low}(\tau_{k}^{v})\bigg|
=~&i_v(0){\rm e}^{-\iota t_{k}^{v}}-L*i_v(0){\rm e}^{-\iota \tau_{k}^{v}}\\
\ge~&(1-L)i_v(0){\rm e}^{-\iota \tau_{k}^{v}}
\end{align*}

Then for both sides, we have
\begin{align*}
(1-L)i_v(0){\rm e}^{-\iota \tau_{k}^{v}}
\le M {\rm e}^{-\iota t_k^v}(\tau_{k}^{v}-t_{k}^{v})
\end{align*}

Then we have
\begin{align*}
(1-L)i_v(0){\rm e}^{-\iota (\tau_{k}^{v}-t_{k}^{v})}
\le M (\tau_{k}^{v}-t_{k}^{v})
\end{align*}

It shows the existence of a positive number $\eta_v$,
which is the root of the transcendental equation $(1-L)i_v(0){\rm e}^{-\iota \eta_v}\le M \eta_v$ and satisfies $\tau_{k}^{v}-t_{k}^{v}\ge\eta_{v}$,
which essentially means that for every $v\in V$, $\inf\{\tau_{k}^{v}-t_{k}^{v}\}>0$. That is to say, there is no Zeno behavior in the high-cost control step \eqref{high-cost-dynamics} during $t\in[t_{k}^{v},\tau_{k}^{v})$.

ii)\ {\em Step two}:

With regard to the low-cost control step \eqref{low-cost-dynamics} during $t\in[\tau_{k}^{v},t_{k+1}^{v})$, notice that the parameter $\beta_v$ switches to low-cost setting $\beta_-$ since $m_v(\tau_{k}^{v})=\varphi_{low}(\tau_{k}^{v})$. Due to $\mathcal{S}_{-}^{v}(t)<\iota$, $\varphi_{up}(t)$ and $\varphi_{low}(t)$ converge to zero faster than $i_v(t)$. This leads to the existence of the next event triggered at $t_{k+1}^{v}$ so that $m_v(t_{k+1}^{v})=\varphi_{up}(t_{k+1}^{v})$, which is a high-cost control event.

Now we prove that there is no Zeno behavior in the low-cost control step \eqref{low-cost-dynamics} during $t\in[\tau_{k}^{v},t_{k+1}^{v})$. This proof is much simpler than that of the high-cost control step. Notice that $i_v(t)$ converges to zero, so we have
\begin{align*}
\varphi_{up}(t_{k+1}^{v})\le\varphi_{low}(\tau_{k}^{v})
\end{align*}
So we have
\begin{align*}
{\rm e}^{-\iota (t_{k+1}^{v}-\tau_{k}^{v})}\le L
\end{align*}
which essentially means that for every $v\in V$, $\inf\{t_{k+1}^{v}-\tau_{k}^{v}\}\ge -\ln{L}/\iota>0$. That is to say, there is no Zeno behavior in the low-cost control step \eqref{low-cost-dynamics} during $t\in[\tau_{k}^{v},t_{k+1}^{v})$.

From the proof above, we have shown that $m_v(t)=p_v^{-1}i_v(t)$ continues to touch $\varphi_{up}(t)$ but could never exceed $\varphi_{up}(t)$ till infinite time, so $i_v$ converges to zero at the convergence speed same as $\varphi_{up}(t)$ (the convergence speed here refers to the average speed through time).

Note that the proof under periodic reference setting (see also \cite{liu2020using}) is similar.
\end{proof}

\subsection{Translating Trigger Rule in Definition \ref{trigger-rule-control} to Algorithm}
In order to employ the parameter switching control method presented above, we need to translate the event-based trigger rule in Definition \ref{trigger-rule-control} into a control algorithm. For this purpose, we need  to observe the states of nodes first. Due to the node heterogeneity (i.e., the parameters are nodes-dependent) of the network system in this paper, the event-based observing method proposed in \cite{liu2020using} is no longer applicable, so we simply use the classical periodic observing method to handle this issue. Besides, with respect to the decentralized control manner of the parameter switching method, we illustrate the algorithm by focusing on one target node $v\in V$.

\begin{algorithm}[!htbp]
\caption{Event-based parameter switching control process according to the trigger rule in Definition \ref{trigger-rule-control}}\label{algorithm-trigger-control}
\textbf{input:} ~$G=(V,E)$, $i_{v}(0)$, $\varphi_{up}$, $\varphi_{low}$, $\beta_{+}$, $\beta_{-}$, $h$\\
\textbf{output:} $\{t_{k}^{v}\}_{k=1}^{+\infty}$ and $\{\tau_{k}^{v}\}_{k=0}^{+\infty}$ for $v\in V$\\
\textbf{initialize:} $t_{1}^{v}\leftarrow 0$ and $\tau_{0}^{v}\leftarrow 0$; $k\leftarrow 1$;\\
Get $p_v>0$ for $v$ as specified in Lemma \ref{lemma-get-m-matrix}\\
$Cycle\leftarrow 0$\\
\While{{\tt true}}{
       $t\leftarrow t_{k}^{v}$\\  
        \While{$Cycle = 0$}{
            \If{$p_v^{-1}i_v(t)\le \varphi_{low}(t)$}{
            switch reactive defense strategy of $v$ to $\beta_{-}$\\
            $Cycle \leftarrow 1$\\
            $\tau_{k}^{v}\leftarrow t$\\
            }
            $t\leftarrow t+h$\\
        }
        \While{$Cycle = 1$}{
            \If{$p_v^{-1}i_v(t)\ge \varphi_{up}(t)$}{
            switch reactive defense strategy of $v$ to $\beta_{+}$\\
            $Cycle \leftarrow 0$\\
            $t_{k+1}^{v}\leftarrow t$\\
            }
            $t\leftarrow t+h$\\
        }
        $k \leftarrow k+1$\\
    }
\end{algorithm}

There are four groups of inputs in Algorithm \ref{algorithm-trigger-control}: attack-defense graph $G=(V,E)$; initial values $i_v(0)$ for node $v\in V$; two criterion functions $\varphi_{up}(t)$ and $\varphi_{low}(t)$; two reactive defense strategies with their corresponding parameter values $\beta_{+}$ and $\beta_{-}$ and a step length parameter $h$ (i.e., the constant time interval of the periodic observing method, see also \cite{liu2020using}).

\subsection{Numerical Examples}
\label{sec:simulation}
We use numerical examples to illustrate the convergence process of the dynamics under control. The numerical examples exhibit the the effectiveness of the proposed event-based parameter switching method. The settings of the examples are defined as follows.

For graph $G$ in the dynamics model, we conduct experiments on both undirected graph and directed graph to put the method into practice. The following network structures  are obtained from \url{http://snap.stanford.edu/data/}  
Note that the extraction of $G$ in practice demands access to the enterprise's physical network topologies and security policies, which are usually confidential data unavailable to academic researchers.
\begin{itemize}
\item Enron email network:
This is an undirected graph with $|V|=5242$ nodes, $|E|=28980$ edges, maximal node degree $81$ and $\lambda_{A,1}=45.6167$.
\item Gnutella peer-to-peer network:
This is a directed graph with $|V|=8,717$ nodes, $|E|=31,525$ links, maximal node in-degree $64$ and $\lambda_{A,1}=4.7395$.
\end{itemize}

We set $\beta_{+}=0.8$, $\beta_{-}=0.1$ with respect to $\beta_v$ for all nodes $v\in V$. As for $\gamma_v$, we randomly select the values for all nodes $v\in V$ with an upper bound $\gamma_{max}=0.002$ for undirected Enron email network and $\gamma_{max}=0.013$ for directed Gnutella peer-to-peer network. With respect to the criterion functions $\{\varphi_{up},\varphi_{low}\}$ of the event-based trigger rule Definition \ref{trigger-rule-control}, we set $\iota=0.5$ and $L=0.5$ for both undirected Enron email network and directed Gnutella peer-to-peer network. Thus, the conditions of the parameter switching method proposed above are satisfied. We calculate the matrix $P=diag(\{p_v\}_{v=1}^n)$ in Lemma \ref{lemma-get-m-matrix} for each graph respectively. As for the initial values, each node $v\in V$ is assigned with an initial compromise probability $i_v(0)\in_R [0,1]$ where $\in_R$ means sampling uniformly at random. Besides, we consider $t\in[0,500]$ with a fixed step-length $h=0.025$. The convergence processes of dynamics are shown in Figure \ref{fig:prob-state-control}. Notice that the grey curve (i.e., the dynamics under control) refers to $i_v(t)$, while the the blue curve (i.e., the adjusted control target) refers to $m_v(t)=p_v^{-1}i_v(t)$.
\begin{figure}[!htbp]
\centering
\subfigure[Node 1855 of undirected Enron email network]{\includegraphics[width=0.46\textwidth]{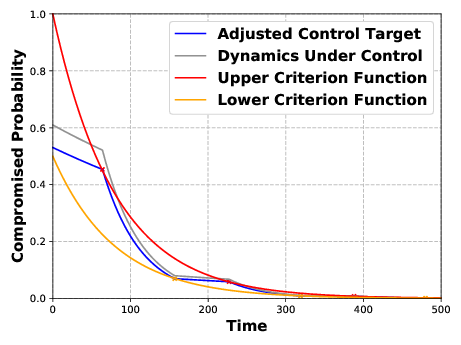}\label{Undirected1}}\hspace{2ex}
\subfigure[Node 2923 of undirected Enron email network]{\includegraphics[width=0.46\textwidth]{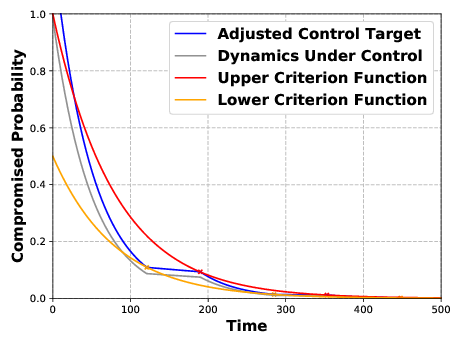}\label{Undirected2}}\hspace{2ex}
\subfigure[Node 1187 of directed Gnutella peer-to-peer network]{\includegraphics[width=0.46\textwidth]{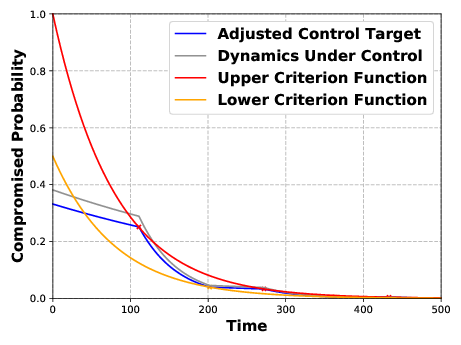}\label{Directed1}}\hspace{2ex}
\subfigure[Node 6992 of directed Gnutella peer-to-peer network]{\includegraphics[width=0.46\textwidth]{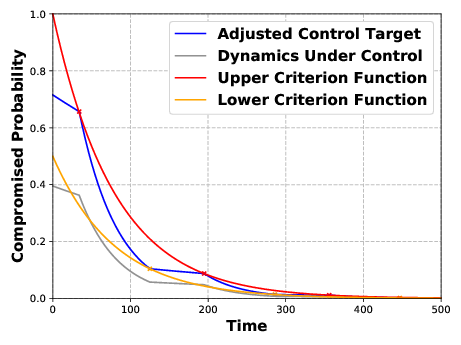}\label{Directed2}}
\caption{The convergence processes of dynamics under parameter switching control for both undirected and directed graph.
\label{fig:prob-state-control}}
\end{figure}
Figure \ref{fig:prob-state-control} exhibits the control process of the proposed event-based parameter switching method and is consistent with the proof of Theorem \ref{theorem-main}. Then we verify that the convergence speed (i.e., the average convergence speed through time) of the dynamics under control is close to the target speed given by the criterion function $\varphi_{up}$. In order to confirm the result, we define the following indicator of convergence speed
and name it {\em exponential speed index}:$$\mathcal{S}(t)=-\frac{1}{\Delta t}\ln\frac{i(t+\Delta t)}{i(t)}.$$
Notice that the exponential speed index of the criterion function is equal to $\iota=0.5$ in our settings, which also represents the target speed index. The exponential convergence speed indexes of the dynamics $i(t)=[i_{1}(t),\cdots,i_{n}(t)]$ under parameter switching control for both undirected and directed graph are shown in Figure \ref{fig:convergence-speed-prob}. Notice that the value of the blue line $\mathcal{S}$ refers to the whole time average of the green curve for $t\in[0,500]$ and should be close to the red line (i.e., the target speed index).
\begin{figure}[!htbp]
\centering
\subfigure[The convergence speed index of undirected Enron email network]{\includegraphics[width=0.46\textwidth]{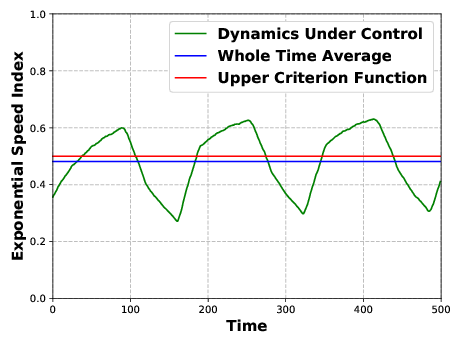}\label{Undirected1}}\hspace{2ex}
\subfigure[The convergence speed index of directed Gnutella peer-to-peer network]{\includegraphics[width=0.46\textwidth]{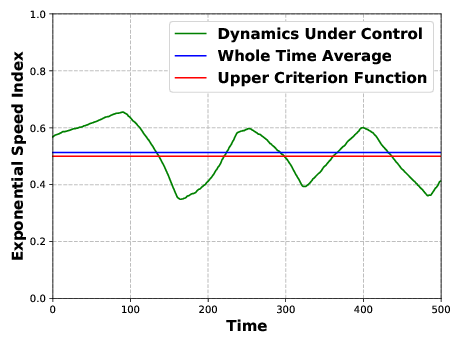}\label{Directed2}}
\caption{The exponential convergence speed indexes of the dynamics under parameter switching control for both undirected and directed graph.
\label{fig:convergence-speed-prob}}
\end{figure}

For the presented convergence speed experiments, the threshold of effectiveness is defined as $\frac{|\mathcal{S}-\iota|}{\iota}$, which should be less than $10\%$. For the undirected Enron email network, we have $\frac{|\mathcal{S}-\iota|}{\iota}=3.72\%$, and for the directed Gnutella peer-to-peer network, we have $\frac{|\mathcal{S}-\iota|}{\iota}=2.60\%$, which shows the effectiveness.

Next, it comes to the control cost, which is indicates by the mean of time ratio $\frac{1}{n}\sum_{v\in V}\frac{T_{+}^v}{T_{+}^v+T_{-}^v}$. For the classical control approach without parameter switching, it holds that $\beta_v=\beta_{+}$ for $t\in[0,500]$ for all nodes $v\in V$. Suppose the control cost is equal to $1$. From the experiments, by using the event-based parameter switching method, the control cost is equal to $0.50$ in the case of undirected Enron email network and $0.53$ in the case of directed Gnutella peer-to-peer network respectively.
That is to say, the new control method should save at least $40\%$ of the cost incurred by the classical approach.  So we  conclude that the event-based parameter switching method can reduce more than $40\%$ of the control cost compared with the classical approach, which shows the efficiency.

\section{Putting the Event-based Method into Practice}
\label{sec:gap}

Similar to \cite{liu2020using}, we need to bridge the gap between the following two kinds of states for utilizing the event-based parameter switching control method in practice.
In the aforementioned model, the state of node $v\in V$ at time $t$ is represented by $i_v(t)$, namely the probability that $v$ is in \compromised\ state at time $t$.
In practice, this state is often measured as a Boolean value, with ``0'' indicating $v$ is \secure\ but vulnerable and ``1'' indicating $v$ is \compromised.
In other words, the {\em sample-state} of node $v\in V$ at time $t$ can be denoted by
\begin{align}
\label{0-1-State}
\chi_{v}(t)=
\begin{cases}
0 & v \text{ is in the \secure\ state at time $t$}\\
1 & v \text{ is in the \compromised\ state at time $t$}.
\end{cases}
\end{align}
This difference underlines the gap between the probability-states in the model and the sample-states in practice.

It is worth noticing that, the algorithm proposed in \cite{liu2020using}, which estimates the probability-states via 0-1 state ergodic process, can not be simply transferred and applied to the current control problem. Unlike the equilibrium estimation task in which the accuracy of estimation is of vital importance, the timeliness is the main focus in our  event-based parameter switching dynamics control. Without the timeliness, the event triggered by rule may suffer from time lag 
, which could invalidate the event-based control method. So despite the 0-1 state sequences over the whole time (i.e., $[0,t]$) may provide higher accuracy of probability estimation, yet we should avoid using them in the present task. Instead, we propose a new algorithm which takes advantage of 0-1 state sequences within a time window.
\subsection{Estimation via 0-1 State Sequences within a Time Window}
Motivated by the theorem of two-valued processes introduced in \cite[Chapter~1]{parzen1999stochastic}, we propose a new method to bridge the aforementioned gap by obtaining an estimation $\widehat{i_{v}(t)}$ of probabilities $i_{v}(t)$ and an estimation $\widehat{s_{v}(t)}$ of probabilities $s_{v}(t)$ from a 0-1 state sequence within a time window, as indicated by \eqref{0-1-State}.
We design a new algorithm for the event-based parameter switching method from the theorem.

Firstly, let us review the theorem of two-valued processes. We use the Lebesgue measure $\mathcal{M}$ to define
\begin{align}
\label{Random-Variables-T}
\begin{cases}
\displaystyle
\mathcal{T}_{v0}(t)=\mathcal{M}\big(\big\{\tau\le t:\chi_{v}(\tau)=0\big\}\big)\\[8pt]
\displaystyle
\mathcal{T}_{v1}(t)=\mathcal{M}\big(\big\{\tau\le t:\chi_{v}(\tau)=1\big\}\big).
\end{cases}
\end{align}
Theorem \ref{Theorem:0-1} below shows how to generate probabilities $\widehat{i_{v}(t)}$ and $\widehat{s_{v}(t)}$ from a 0-1 state ergodic process over time.

\begin{theorem}[\cite{parzen1999stochastic}]
\label{Theorem:0-1}
Let $\{\chi_{v}(t),t>0\}$ for $v\in V$ be a 0-1 state ergodic process.
Let
\begin{eqnarray*}
\begin{cases}
\displaystyle
\widehat{s_{v}(t)}
=
\frac{\mathcal{T}_{v0}(t)}{t}\\[8pt]
\displaystyle
\widehat{i_{v}(t)}
=
\frac{\mathcal{T}_{v1}(t)}{t},
\end{cases} 
\end{eqnarray*}
then it is obvious that $\lim_{t\to+\infty}\big[\mathbb{P}\big(\chi_{v}(t)=0\big)-\widehat{s_{v}(t)}\big] = 0$,  $\lim_{t\to+\infty}\big[\mathbb{P}\big(\chi_{v}(t)=1\big)-\widehat{i_{v}(t)}\big] = 0$.
\end{theorem}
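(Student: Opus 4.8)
The statement to prove, Theorem~\ref{Theorem:0-1}, is an ergodic-theoretic fact about two-valued processes that is cited from \cite{parzen1999stochastic}, so the plan is essentially to unwind the definitions and invoke the ergodic theorem rather than to produce anything genuinely new. First I would fix $v\in V$ and observe that the indicator process $\{\mathbf{1}[\chi_v(\tau)=1]:\tau>0\}$ is a bounded measurable stationary process (stationarity being part of what ``ergodic process'' buys us here), and that by definition $\mathcal{T}_{v1}(t)=\int_0^t \mathbf{1}[\chi_v(\tau)=1]\,{\rm d}\tau$ and $\mathcal{T}_{v0}(t)=\int_0^t \mathbf{1}[\chi_v(\tau)=0]\,{\rm d}\tau$, so that $\widehat{i_v(t)}=\frac{1}{t}\int_0^t \mathbf{1}[\chi_v(\tau)=1]\,{\rm d}\tau$ is exactly a time average of the indicator. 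The continuous-time Birkhoff ergodic theorem then gives $\widehat{i_v(t)}\to \E[\mathbf{1}[\chi_v(0)=1]]=\Prob(\chi_v(0)=1)$ almost surely as $t\to+\infty$, and by stationarity $\Prob(\chi_v(t)=1)=\Prob(\chi_v(0)=1)$ for every $t$, which yields $\lim_{t\to+\infty}[\Prob(\chi_v(t)=1)-\widehat{i_v(t)}]=0$; the statement for $\widehat{s_v(t)}$ is identical with $0$ in place of $1$, and the two are consistent since $\widehat{s_v(t)}+\widehat{i_v(t)}=1$.

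In the order I would carry it out: (i) record the measurability/boundedness of the sample paths so the Lebesgue integrals in \eqref{Random-Variables-T} are well defined; (ii) rewrite $\mathcal{T}_{v0},\mathcal{T}_{v1}$ as integrals of indicators and hence $\widehat{s_v(t)},\widehat{i_v(t)}$ as normalized time averages; (iii) apply the ergodic theorem to pass to the limit; (iv) use stationarity to identify the limit with $\Prob(\chi_v(t)=0)$ and $\Prob(\chi_v(t)=1)$ and conclude. Since the theorem is quoted verbatim from Parzen, a fully rigorous treatment may simply cite \cite[Chapter~1]{parzen1999stochastic}; the sketch above is the self-contained version.

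The main obstacle is not really a computational one but a modeling/hypothesis one: the word ``ergodic process'' must be read in the strong sense that the process is stationary and that its time averages converge to the corresponding ensemble average (equivalently, the invariant $\sigma$-field is trivial). Under that reading the result is immediate from Birkhoff; the only thing to be careful about is the mode of convergence --- the ergodic theorem delivers almost-sure (and $L^1$) convergence of $\widehat{i_v(t)}$, and then the difference $\Prob(\chi_v(t)=1)-\widehat{i_v(t)}$ tends to $0$ in that same sense. If instead one only assumes ergodicity of the one-dimensional marginals, the statement can fail, so I would flag the stationarity assumption explicitly rather than leave it implicit. Beyond that caveat, there is nothing deep here: the work is entirely in matching the notation of \eqref{Random-Variables-T} to the hypotheses of the cited theorem.
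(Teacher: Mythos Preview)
Your proposal is correct, and in fact it goes beyond what the paper does: the paper gives no proof of Theorem~\ref{Theorem:0-1} at all, simply citing it from \cite{parzen1999stochastic} and stating that the conclusion ``is obvious.'' Your sketch via the continuous-time Birkhoff ergodic theorem is the standard self-contained argument, and your observation that one may instead just cite \cite[Chapter~1]{parzen1999stochastic} is exactly the route the paper takes.
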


In \cite{liu2020using}, $\widehat{i_{v}(t)}$ and $\widehat{s_{v}(t)}$ is used to estimate $i_{v}(t)$ and  $s_{v}(t)$ at sufficiently large time $t$ respectively. However,  as the time-averaged estimations of probability state, $\widehat{i_{v}(t)}$ and $\widehat{s_{v}(t)}$ suffer from an increasing time lag as $t\to\infty$. Due to the convergence property of the preventive and reactive cyber defense dynamics, the time-averaged estimation still converges to the original equilibrium, which makes it effective in the scenario of \cite{liu2020using}. But in our present scenario,
we need a new estimation method with less time lag and faster reaction. For simplicity, sometimes  $\mathcal{W}$ also represents the time window which covers the $\mathcal{W}$ most recent time points before $t$. Let
\begin{align}
\label{Random-Variables-T}
\begin{cases}
\displaystyle
\mathcal{T}_{v0}^{\mathcal{W}}(t)=\mathcal{M}\big(\big\{t-\mathcal{W}<\tau\le t:\chi_{v}(\tau)=0\big\}\big)\\[8pt]
\displaystyle
\mathcal{T}_{v1}^{\mathcal{W}}(t)=\mathcal{M}\big(\big\{t-\mathcal{W}<\tau\le t:\chi_{v}(\tau)=1\big\}\big)\\[8pt]
\end{cases}
\end{align}

Notice that the smaller $\mathcal{W}$ implies the less time lag and the lower accuracy the probability estimation exhibits. So we face the immediacy-accuracy trade-off dilemma when selecting the proper time window $\mathcal{W}$. It is also worth noting that a simple way to enhance the estimation accuracy is to increase the sampling frequency $h$ (i.e., the constant time interval of the periodic observing method, see also \cite{liu2020using}). But there is always a limitation on the sampling frequency in practice. In order to cope with this trade-off dilemma, we later propose the design of adaptive time windows, which shows great performance.

In order to simulate a 0-1 ergodic process for $\forall v\in V$, the paper samples node $v$ at time $t$ by its compromise probability $i_v(t)$:
\begin{align}
\label{RandomChoice}
\chi_{v}(t)=H\big[i_v(t)-Rand(0,1)\big]
\end{align}
where $Rand(0,1)$ means drawing a random real number uniformly from $[0,1]$, and $H$ is the Discrete Heaviside step function:
\begin{align}
\label{Heaviside}
H(x)=
\begin{cases}
0 & x<0\\
1 & x\ge0.
\end{cases}
\end{align}

\subsection{Using the Event-based Control Method in Practice}
With the aid of the newly proposed estimation method with adaptive time windows, 
which exhibits less time lag and faster response, the gap between probability-states and sample-states has been properly bridged. We now use the aforementioned event-based parameter switching method in practice to control the preventive and reactive cyber defense dynamics.
Since undirected networks are a special case of directed networks, only experiments on the directed Gnutella peer-to-peer Network are performed here. Let the time window $\mathcal{W}=30$.
As proposed above, we use an adaptive time window $$\mathcal{W}^{'}(t)=\max(\mathcal{W},\frac{t}{C_0})$$ to replace the original time window $\mathcal{W}$ with a fixed time length, where $C_0$ is a positive constant number.

In the settings of our numerical examples,
let $C_0=3$ for $\mathcal{W}^{'}(t)$. Figure \ref{fig:sample-state-control-adaptive} shows the convergence process of the dynamics under control, using a time window with adaptive time length to estimate the probability $m_v(t)$. We can find that there are still some events
triggered in the advanced stage $t\in [300,500]$ when probability $i_v(t)$ has fully converged to equilibrium zero, which means the event-based parameter switching method becomes effective by using the adaptive time window. With respect to the convergence speed indexes, notice that the threshold for it defined in Section \ref{sec:simulation} is $10\%$. Figure \ref{speed-adaptive-window} shows the effectiveness of the time window with adaptive time length, with $\frac{|\mathcal{S}-\iota|}{\iota}=6.79\%<10\%$, while $\frac{|\mathcal{S}-\iota|}{\iota}=27.27\%>10\%$ for the time window with fixed time length exhibited in Figure \ref{speed-fixed-window}. Besides, with respect to the control cost, which is indicated by the mean of time ratio $\frac{1}{n}\sum_{v\in V}\frac{T_{+}^v}{T_{+}^v+T_{-}^v}$, recall the statement in Section \ref{sec:simulation}, the threshold of efficiency is defined as $40\%$ of the cost. The control cost in the current numerical example is $0.60$, which means the reactive defense setting is under low-cost setting for $40\%$ of the time.(the low-cost reactive defense setting takes up $40\%$ of the time)
\begin{figure}[!htbp]
\centering
\subfigure[Node 2688 using a time window with adaptive time length]{\includegraphics[width=0.46\textwidth]{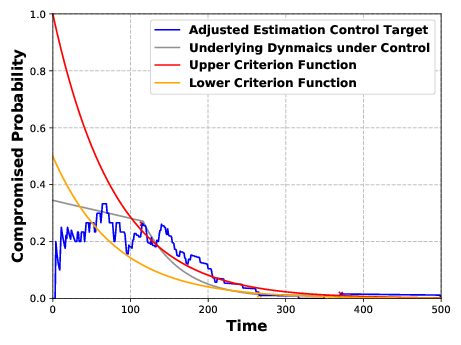}\label{Discrete1}}\hspace{2ex}
\subfigure[Node 4011 using a time window with adaptive time length]{\includegraphics[width=0.46\textwidth]{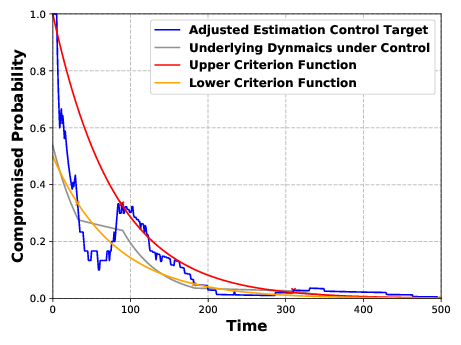}\label{Discrete2}}
\caption{The convergence processes of dynamics under parameter switching control aiming at adjusted sample-states estimation, using a time window with adaptive time length.
\label{fig:sample-state-control-adaptive}}
\end{figure}
\begin{figure}[!htbp]
\centering
\subfigure[The convergence speed index, using a time window with fixed time length]{\includegraphics[width=0.46\textwidth]{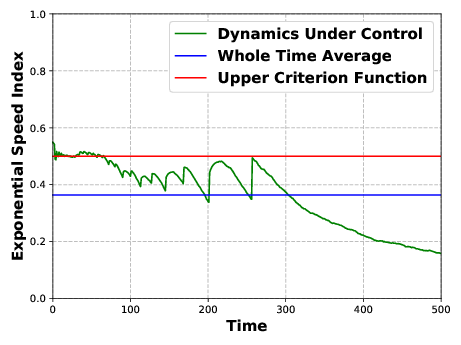}\label{speed-fixed-window}}\hspace{2ex}
\subfigure[The convergence speed index, using a time window with adaptive time length]{\includegraphics[width=0.46\textwidth]{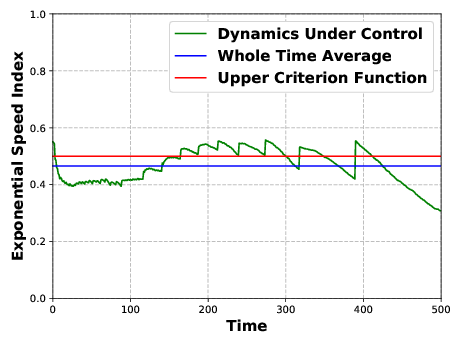}\label{speed-adaptive-window}}
\caption{The exponential convergence speed indexes of the dynamics under parameter switching control, using a time window with fixed or adaptive time length.
\label{fig:speed-fixed-window-or-not}}
\end{figure}

\section{Conclusion}
\label{sec:conclusion}

In this paper, an event-based parameter switching method is proposed for the control tasks of cybersecurity, which helps avoid excessive control costs as well as guarantees the dynamics to converge as our desired speed. The Zeno-free property is proved, implying the feasibility of the method. Meanwhile, we designed a new estimation method with adaptive time windows in order to bridge the gap between the probability state and the sampling state with less time lags. Both theoretical and  practical experiments are given to illustrate the parameter switching method, which show the effectiveness and efficiency of our new method.

There are many open problems for future research: Do there exist some better event-based parameter switching trigger rule so as to save more defense resources and guarantee the convergence speed? Can this parameter switching approach be applied to other dynamics control? Furthermore, similar parameter switching approaches regarding the parameter $\gamma$ for push-based attacks are worth further
studying.


%




\bibliography{allbib,IEEEabrv}

\bibliographystyle{splncs04}

\end{document}